\documentclass[12pt]{article}

\usepackage{amsmath}
\usepackage{amssymb}
\usepackage{amsfonts}
\usepackage{latexsym}
\usepackage{color}

\catcode `\@=11 \@addtoreset{equation}{section}

\catcode `\@=12

\newcommand{\be}{\begin{equation}}
	\newcommand{\en}{\end{equation}}
\newcommand{\bea}{\begin{eqnarray}}
	\newcommand{\ena}{\end{eqnarray}}
\newcommand{\beano}{\begin{eqnarray*}}
	\newcommand{\enano}{\end{eqnarray*}}
\newcommand{\bee}{\begin{enumerate}}
	\newcommand{\ene}{\end{enumerate}}

\newcommand{\mb}{\mathbb}

\newcommand{\B}{{\mathfrak B}}

\newcommand{\mc}{\mathcal}
\newcommand{\norm}[1]{ \parallel #1 \parallel}

\newcommand{\id}{{\mb I}}

\newcommand{\Sc}{{\cal S}}

\newcommand{\F}{{\cal F}}

\newcommand{\C}{{\cal C}}
\newcommand{\1}{1 \!\! 1}

\newcommand{\Hil}{\mc H}
\newcommand{\xhp}{x_{\hat\Psi}(t)}
\newcommand{\dhp}{\delta_{\hat\Psi}}

\renewcommand{\l}{\langle}
\renewcommand{\r}{\rangle}
\newcommand{\pint}{\l\cdot,\cdot\r}
\newcommand{\pin}[2]{\l#1 , #2\r}
\newcommand{\nor}{\|\cdot\|}

\newtheorem{thm}{Theorem}

\newtheorem{lemma}[thm]{Lemma}
\newtheorem{prop}[thm]{Proposition}
\newtheorem{defn}[thm]{Definition}

\newenvironment{proof}{\noindent {\bf Proof --}}{\hfill$\square$ \vspace{3mm}\endtrivlist}

\catcode `\@=11 \@addtoreset{equation}{section}
\catcode `\@=12

\begin{document}

\thispagestyle{empty}

\vspace*{2cm}

\begin{center}
{\Large \bf New results for Heisenberg dynamics for non self-adjoint Hamiltonians}   \vspace{2cm}\\

{\large F. Bagarello}\\
Dipartimento di Ingegneria,
Universit\`a di Palermo,\\ I-90128  Palermo, Italy\\
and I.N.F.N., Sezione di Catania\\
e-mail: fabio.bagarello@unipa.it\\

\end{center}

\vspace*{1cm}

\begin{abstract}

In a previous paper we  began our analysis on the role of non self-adjoint Hamiltonians in connection with the  Heisenberg dynamics for quantum systems. Here, motivated by the growing interest on this topic and on some recent results on dynamical systems, we continue this analysis focusing on what we believe is an unexplored (or, at least, not so explored! aspect of Heisenberg dynamics, related to the need for using vectors which are {\em brute-force normalized}. Our main interest is on conserved quantities, and on conditions which guarantee that some observables of the system, or their mean values, do not evolve in time.

\end{abstract}

\vspace{2cm}


\vfill


\newpage

\section{Introduction}

The relevance of non-Hermitian\footnote{In this paper we will use Hermitian and self-adjoint as sinonimous terms.} Hamiltonians (or, more generally, non-Hermitian observables) in quantum mechanics is nowadays out of discussion. Their role for some physical systems is essential, and the mathematical consequences of their presence produce quite interesting and highly non trivial results.  Some relevant monographs and edited books on several aspects of these observables are  \cite{benbook}-\cite{bagspringer}.

In this paper we will focus on a very special aspect of non-Hermitian Hamiltonians: the Heisenberg dynamics generated by some $H\neq H^\dagger$, on the same lines as our original analysis, began in \cite{bagAoP1,bagAoP2,bag2022}, where we considered some aspects of this dynamics, showing in particular, \cite{bag2022}, the kind of technical problems that arise when $H\neq H^\dagger$, since the time evolution of the product of two observables $X$ and $Y$, $(XY)(t)$, is in general different from $X(t)Y(t)$. In \cite{bag2022} we have also proposed a possible definition of what a {\em symmetry} and an {\em integral of motion} should be for a system $\Sc$ driven by some $H\neq H^\dagger$. This particular aspect is rather important and was already considered by different authors, see \cite{yog1}-\cite{lisok} for instance, but in \cite{bag2022} we proposed some systematic approach to the problem. 

In a different context, that of dynamical systems and Decision Making, in an older paper, \cite{bag2020}, in the attempt to describe a {\em one-directional dynamics}, we have proposed some very simple dynamical systems driven by non-Hermitian Hamiltonians where, in some unexpected way, conserved quantities are found. These conserved quantities are the mean values of certain global number operators relevant for the analysis of the system $\Sc$ computed not on the time evoluted wave function $\Psi(t)$ of $\Sc$, but on its normalized version, $\hat\Psi(t)=\frac{\Psi(t)}{\norm{\Psi(t)}}$. The reason for this choice, \cite{muga,sim,sgh}, is that when $H\neq H^\dagger$, the norm of the wave function of $\Sc$ is not preserved in time, if we use $\Psi(t)=e^{-iHt}\Psi(0)$, so that it can easily happen that $\|\Psi(t)\|$ explodes or goes to zero with $t$. In some situations this is exactly what we expect, but in other cases this is something we want to avoid. And a way to avoid this behavior is to replace $\Psi(t)$ with $\hat\Psi(t)$. This replacement changes quite a bit also the mathematics of the problem, and this is indeed what we will discuss in this paper, both proposing some general results and describing a concrete example, borrowed from \cite{bag2020}.

The paper is organized as follows:

In the next section we introduce the notation and list some preliminary results, most of them already deduced in \cite{bag2022}.

Then, in Section \ref{sect3}, we discuss what happens when $\Psi(t)$ is replaced by $\hat\Psi(t)$, both at the level of Schr\"odinger and of Heisenberg dynamics. In particular,  we focus on the existence of conserved quantities. We will see that several differences arise with respect to what happens when using $\Psi(t)$. Hence, normalization is not a trivial procedure for us.

In Section \ref{sect4} we discuss an example coming from Decision Making, \cite{bag2020}, and we show that what was found in \cite{bag2020} fits well in the results of Section \ref{sect3}.

Section  \ref{sectconcl} contains our conclusions, and plans for the future.

  \section{Preliminaries}\label{sect2}
  
  In this section we will introduce first the mathematical settings we work with along this paper, and then we will briefly resume some of our previous results in \cite{bag2022}.
  
  \subsection{The mathematical settings}\label{sect2a}

  Let $\Hil$ be an Hilbert space, with $dim(\Hil)=N<\infty$, with scalar product $\pint$ and associated norm $\nor$, where $\|f\|=\sqrt{\pin{f}{f}}$, $\forall f\in\Hil$. As usual we have $\left<f,g\right>=\sum_{k=1}^N\overline{f_k}\,g_k$, $f,g\in\Hil$. The scalar product is also linked to the  conjugation $\dagger$ defined as $\pin{X^\dagger f}{g}=\pin{f}{Xg}$, $\forall f,g\in\Hil$. Here $X$ is an operator on $\Hil$ which, in our particular case, is a square matrix of dimension $N\times N$. More in general, also in view of future extensions to infinite-dimensional Hilbert spaces, we could say that $X\in\B(\Hil)$, the $C^*$-algebra of all bounded operators on $\Hil$, \cite{br}. In doing this the dimensionality of $\Hil$ could also be infinite.

  The main ingredient of our analysis is an operator (i.e. a matrix) $H$, acting on $\Hil=\mathbb{C}^{N}$, with $H\neq H^\dagger$. We assume for simplicity (but this does not affect our conclusions) that $H$ has exactly $N$ distinct eigenvalues $E_n$, $n=1,2,\ldots,N$. Here, the adjoint $H^\dagger$ of $H$ is the usual one, i.e. the complex conjugate of the transpose of the matrix $H$. 
  
  Other than being all different, we will also assume that $E_n\in\mathbb{R}$, $n=1,2,,\ldots,N$. This is more important since, in this way, we can check that $H$ and $H^\dagger$ are isospectral and, as such, they admit an intertwining operator $X$ such that $XH=H^\dagger X$. This isospectrality is lost if even only one eigenvalue of $H$ is complex. This situation has been considered, for instance, in \cite{bagAoP1,bagAoP2}, to which we refer for more details.

 The $N$ distinct real eigenvalues of $H$ correspond to $N$ distinct eigenvectors $\varphi_k$, $k=1,2,\ldots,N$:
  \be
  H\varphi_k=E_k\varphi_k.
  \label{21}\en
  The set $\F_\varphi=\{\varphi_k,\,k=1,2,\ldots,N\}$ is a basis for $\Hil$, since the eigenvalues are all different. Then an unique biorthogonal basis of $\Hil$, $\F_\Psi=\{\Psi_k,\,k=1,2,\ldots,N\}$, surely exists, \cite{chri,heil}: $\left<\varphi_k,\Psi_l\right>=\delta_{k,l}$, for all $k, l$. Moreover, for all $f\in\Hil$, we can write
  $f=\sum_{k=1}^N\left<\varphi_k,f\right>\Psi_k = \sum_{k=1}^N\left<\Psi_k,f\right>\varphi_k$.  We know that the vectors in $\F_\Psi$ are eigenstates of $H^\dagger$ with eigenvalues $E_k$, see e.g. \cite{bagAoP1}:
  \be
  H^\dagger\Psi_k=E_k\Psi_k,
  \label{22}\en
  $k=0,1,2,\ldots,N$. 
  
  Using the bra-ket notation we can write $\sum_{k=0}^N|\varphi_k\left>\right<\Psi_k|=\sum_{k=0}^N|\Psi_k\left>\right<\varphi_k|=\1$, where, for all $f,g,h\in\Hil$, we define $(|f\left>\right<g|)h:=\left<g,h\right>f$, and $\1$ is the identity operator on $\Hil$. We introduce the operators $S_\varphi=\sum_{k=0}^N|\varphi_k\left>\right<\varphi_k|$ and $S_\Psi=\sum_{k=0}^N|\Psi_k\left>\right<\Psi_k|$, as in \cite{bagbook}. These are bounded positive, self-adjoint, invertible operators, one the inverse of the other: $S_\Psi=S_\varphi^{-1}$. They are often called {\em metric operators}, since they can be used to define new scalar products in $\Hil$. This aspect will not be discussed here, since it is not relevant for us. We refer to \cite{bag2022} for some results in this direction. Moreover:
  \be
  S_\varphi\Psi_n=\varphi_n,\quad S_\Psi\varphi_n=\Psi_n,\quad \mbox{ as well as }\quad S_\Psi H=H^\dagger S_\Psi,\quad S_\varphi H^\dagger= HS_\varphi.
  \label{23}\en
 Many other details are discussed in \cite{bagspringer,bagAoP1,bag2022,bagbook}.

 \subsection{Some previous results}\label{sect2b}

 Here what is relevant for us, is to introduce the Heisenberg-like dynamics we will consider next, and discuss its properties. To fix the ideas, we consider the   Schr\"odinger equation $i\dot \Psi(t)=H\Psi(t)$ as the starting point of our analysis. Hence we introduce a map $\gamma^t$ on $\B(\Hil)$ as follows:
 \be
 \pin{\varphi(t)}{X\psi(t)}=\pin{\varphi_0}{\gamma^t(X)\psi_0},
 \label{24}\en
 where $\varphi(t)$ and $\psi(t)$ both obey the Schr\"odinger equation, with initial values  $\varphi(0)=\varphi_0$ and $\psi(0)=\psi_0$. Then we have
 \be
 \gamma^t(X)=e^{iH^\dagger t}Xe^{-iHt}.
 \label{25}\en
 This is what we have called {\em $\gamma$-dynamics} in \cite{bag2022}. It is clear that, $\forall X\in\B(\Hil)$, $\gamma^t(X)\in\B(\Hil)$ as well, $\forall t\in\mathbb{R}$. It is also clear that, if $H=H^\dagger$, then $\gamma^t(X)=e^{iHt}Xe^{-iHt}$, which is exactly the standard Heisenberg dynamics for a self-adjoint Hamiltonian.

 For self-adjoint Hamiltonians many results are well known. Indeed, 
 let us consider an operator $H_0=H_0^\dagger$, independent on time and such that $H_0\in\B(\Hil)$, and let us call
 \be
 \alpha_0^t(X)=e^{iH_0t}Xe^{-iH_0t},
 \label{26}\en
 $X\in\B(\Hil)$. Then we have:
 \begin{enumerate}
 	\item $\alpha_0^t(\1)=\1$, $\forall t\in\mathbb{R}$, where $\1$ is the identity operator on $\Hil$. Hence $\alpha_0^t$ {\em preserves } the identity operator.
 	\item $\alpha_0^t(XY)=\alpha_0^t(X)\alpha_0^t(Y)$, $\forall X,Y\in\B(\Hil)$: $\alpha_0^t$ is an authomorphism of $\B(\Hil)$.
 	\item $\alpha_0^t$ preserves the adjoint: $\alpha_0^t(X^\dagger)= (\alpha_0^t(X))^\dagger$, $\forall X\in\B(\Hil)$.
 	\item $\alpha_0^t$ is norm continuous: if $\{X_n\}$ converges in the norm of $\B(\Hil)$ to $X$, $\|X_n-X\|\rightarrow 0$ for $n\rightarrow\infty$, then $\|\alpha_0^t(X_n)-\alpha_0^t(X)\|\rightarrow 0$ in the same limit.
 	\item if $Z\in\B(\Hil)$ commutes with $H_0$, $[H_0,Z]=0$, then $\alpha_0^t(Z)=Z$,  $\forall t\in\mathbb{R}$: all the operators commuting with $H_0$ are {\em constant of motion}.
 	\item if we introduce $\delta_0(X)=\lim_{t,0}\frac{\alpha_0^t(X)-X}{t}$, the limit to be understood in the norm of $\B(\Hil)$, then $\delta_0$ is a *-derivation: (i) $\delta_0(X^\dagger)=(\delta_0(X))^\dagger$, and (ii) $\delta_0(XY)=X\delta_0(Y)+\delta_0(X)Y$, $\forall X,Y\in\B(\Hil)$.
 	\item The series $\sum_{k=0}^{\infty}\frac{t^k\delta_0^k(X)}{k!}$ is norm convergent to $\alpha_0^t(X)$ for all $X\in\B(\Hil)$. Here $\delta_0^k(X)$ is defined recursively as follows: $\delta_0^0(X)=X$, and $\delta_0^k(X)=\delta_0(\delta_0^{k-1}(X))$, $k\geq1$.
 \end{enumerate}

 Some of these properties of $\alpha_0^t$ can be extended to $\gamma^t$, but some other can not. The proofs and the details of our claims below can be found in \cite{bag2022}. We start recalling that $\gamma^t(XY)=\gamma^t(X)\gamma^t(Y)$, $\forall X,Y\in\B(\Hil)$ if and only if $\gamma^t(\1)=\1$. Then $\gamma^t$ can be an automorphism only if the identity operator does not evolve in time (with respect to $\gamma^t$).
Moreover,  $\gamma^t(\1)=\1$ only if $H=H^\dagger$. Hence  $\gamma^t$ is an automorphism of $\B(\Hil)$ only for self-adjoint $H$. There is no way out.

 Next, the norm continuity of $\gamma^t$ is a consequence of the fact that $H\in\B(\Hil)$. Indeed in this case, since $\|H\|<\infty$,  the series $\sum_{k=0}^\infty \frac{1}{k!}\,(iHt)^k$ is norm-convergent for all $t\in\mathbb{R}$ and this allows us to conclude that
$$
\|e^{\pm iHt}\|\leq e^{|t|\|H\|}, \qquad \|e^{\pm iH^\dagger t}\|\leq e^{|t|\|H\|},
$$
 $\forall t\in\mathbb{R}$. Hence, if $\|X_n-X\|\rightarrow 0$ for $n\rightarrow\infty$, it easily follows that 
 $
 \left\|\gamma^t(X_n)-\gamma^t(X)\right\|\rightarrow0,
 $
 for $n\rightarrow\infty$.

 Pushing forward our parallel between $\alpha_0^t$ and $\gamma^t$, we now define a map $\delta_\gamma:\B(\Hil)\rightarrow\B(\Hil)$ as follows:
 \be
 \delta_\gamma(X)=\|.\|-\lim_{t,0}\frac{\gamma^t(X)-X}{t}=i\left(H^\dagger X-XH\right),
 \label{32a}\en
 $X\in\B(\Hil)$, where the last result follows from the continuity of $H$, $H^\dagger$, and of their exponentials. Notice that $ \delta_\gamma(X)\in\B(\Hil)$. This is what we call {\em $\gamma$-derivation}, which is inner, \cite{bag2022}:  it exists  $H\in\B(\Hil)$ which, together with $H^\dagger$, {\em represents} $\delta_\gamma$ as a  generalized commutator.
 
 We know  that $\delta_\gamma(X^\dagger)=(\delta_\gamma(X))^\dagger$, $\forall X\in\B(\Hil)$, and $\delta_\gamma$ is norm-continuous, since $\|\delta_\gamma(X_n)-\delta_\gamma(X)\|\leq2\|H\|\|X_n-X\|\rightarrow0$, for any sequence $\{X_n\}$ norm-convergent to $X$. Also, putting $\delta_\gamma^0(X)=X$ and $\delta_\gamma^k(X)=\delta_\gamma(\delta_\gamma^{k-1}(X))$, $k\geq1$, we know that
 	the series $\sum_{k=0}^{\infty}\frac{t^k\delta_\gamma^k(X)}{k!}$ is norm convergent to $\gamma^t(X)$, for all $X\in\B(\Hil)$.
 This, in particular, is the counterpart of the claim in 7 in the list above of the results for $H_0$.
Another interesting result, which extends our previous claim, is the following:

 \begin{prop}\label{prop3}
 	The following statements are equivalent: 1) $\delta_\gamma$ is a *-derivation; 2) $\delta_\gamma(\1)=0$; 3) $H=H^\dagger$; 4) $\gamma^t(\1)=\1$; 5)  $\gamma^t(XY)=\gamma^t(X)\gamma^t(Y)$, $\forall X,Y\in\B(\Hil)$.
 \end{prop}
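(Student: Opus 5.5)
The plan is to use the explicit formulas $\gamma^t(X)=e^{iH^\dagger t}Xe^{-iHt}$ and $\delta_\gamma(X)=i(H^\dagger X-XH)$, and to organise the equivalences around the pivot statement 3), $H=H^\dagger$. The chain I will establish is
$$1)\Rightarrow 2)\Rightarrow 3)\Rightarrow 4)\Rightarrow 5)\Rightarrow 4)\Rightarrow 2),\qquad 3)\Rightarrow 1).$$
Since $1)\Rightarrow 2)\Rightarrow 3)$ and $3)\Rightarrow 1)$, statements 1), 2), 3) are equivalent. Since $2)\Rightarrow 3)\Rightarrow 4)\Rightarrow 5)$ and $5)\Rightarrow 4)\Rightarrow 2)$, statements 2), 3), 4), 5) are equivalent. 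Together these give all five.

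First, $1)\Rightarrow 2)$. Any derivation kills the identity: applying the Leibniz rule to $\1=\1\cdot\1$ gives $\delta_\gamma(\1)=2\delta_\gamma(\1)$, hence $\delta_\gamma(\1)=0$. Next, $2)\Rightarrow 3)$ is immediate, since $\delta_\gamma(\1)=i(H^\dagger-H)$. For $3)\Rightarrow 1)$, if $H=H^\dagger$ then $\delta_\gamma(X)=i[H,X]$. This is a commutator with a fixed operator, so it satisfies the Leibniz rule. It also commutes with $\dagger$ (already noted in the paper for general $H$, and easily checked here). Hence it is a *-derivation. Finally, $3)\Rightarrow 4)$ holds because $\gamma^t(\1)=e^{iHt}e^{-iHt}=\1$ when $H=H^\dagger$.

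For $4)\Rightarrow 2)$, I differentiate at $t=0$. If $\gamma^t(\1)=\1$ for all $t$, then by definition~(\ref{32a}) $\delta_\gamma(\1)=\lim_{t\to0}(\gamma^t(\1)-\1)/t=0$. This closes the loop among 2), 3), 4).

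The remaining link is between 4) and 5). For $5)\Rightarrow 4)$, I write $\gamma^t(\1)=\gamma^t(\1\cdot\1)=\gamma^t(\1)^2$. So $P_t:=\gamma^t(\1)=e^{iH^\dagger t}e^{-iHt}$ is an idempotent, and it is invertible as a product of invertible exponentials. An invertible idempotent equals $\1$: multiply $P_t^2=P_t$ by $P_t^{-1}$. This gives 4).

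For $4)\Rightarrow 5)$, I note that $\gamma^t(\1)=\1$ means $e^{iH^\dagger t}=e^{iHt}$. Then
$$\gamma^t(X)\gamma^t(Y)=e^{iH^\dagger t}X\,e^{-iHt}e^{iH^\dagger t}\,Ye^{-iHt}=e^{iH^\dagger t}XYe^{-iHt}=\gamma^t(XY).$$
Equivalently, one could route this through 3), which was already obtained from 4) via 2). I do not expect a real obstacle. The only delicate point is making sure every statement is linked in both directions. Using differentiation at $t=0$ for $4)\Rightarrow 2)$ avoids having to extract $H=H^\dagger$ directly from the exponential identity.
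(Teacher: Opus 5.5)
Your proof is correct and follows essentially the route the paper indicates. The paper defers the proof to \cite{bag2022}, but its surrounding text sketches exactly your chain: multiplicativity of $\gamma^t$ is equivalent to $\gamma^t(\1)=\1$, which forces $H=H^\dagger$, and the derivation statements rest on $\delta_\gamma(\1)=i(H^\dagger-H)$ together with the general identity $\delta_\gamma(X^\dagger)=(\delta_\gamma(X))^\dagger$.

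The only point worth making explicit is that 4) and 5) must hold for all $t$ (or at least for $t$ near $0$), as your step $4)\Rightarrow 2)$ tacitly assumes. For a single fixed $t$ they do not force self-adjointness: with $t=2\pi$ and $H$ non-Hermitian with distinct integer eigenvalues, one gets $\gamma^{2\pi}(\1)=\1$.
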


 The main content of this proposition is that $\gamma^t$ cannot be an authomorphism of $\B(\Hil)$ if any of the above properties (and therefore all) is violated. Again, the fact that the identity operator is stable under time evolution looks crucial.
 We refer to \cite{bag2022} for a very {\em minimal} and illustrative example of the kind of problems which naturally arise when $H\neq H^\dagger$: simple dynamical systems cannot be solved easily since the time evolution of products of variables is not, in general, the product of the time evolution of the same variables. This creates a lot of unpleasant consequences, when one tries to find a closed system of differential equations to solve.

Within the contest considered here, in \cite{bag2022} we have also proved the following result, related to the possibility of having conserved quantities:

 \begin{prop}\label{prop4}
 Let  $X\in\B(\Hil)$. The following statements are all equivalent: 1) $H^\dagger X=XH$; 2) $\delta_\gamma(X)=0$; 3) $ \gamma^t(X)=X$. When $X$ satisfies these conditions, $X$ is called a $\gamma$-symmetry.
 \end{prop}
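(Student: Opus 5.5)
The plan is to prove the cycle of implications $1)\Rightarrow 2)\Rightarrow 3)\Rightarrow 1)$, working directly from the formulas (\ref{25}) and (\ref{32a}).

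The step $1)\Rightarrow 2)$ is immediate. By (\ref{32a}), $\delta_\gamma(X)=i\left(H^\dagger X-XH\right)$, so the hypothesis $H^\dagger X=XH$ gives $\delta_\gamma(X)=0$.

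For $2)\Rightarrow 3)$ there are two natural routes.

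\emph{Series route.} The norm-convergent expansion $\gamma^t(X)=\sum_{k\geq0}\frac{t^k\delta_\gamma^k(X)}{k!}$ recalled above applies. If $\delta_\gamma(X)=0$, then $\delta_\gamma^k(X)=\delta_\gamma\big(\delta_\gamma^{k-1}(X)\big)=0$ for every $k\geq1$, so only the $k=0$ term survives and $\gamma^t(X)=X$ for all $t\in\mathbb{R}$.

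\emph{Derivative route.} Alternatively, one computes
\[
\frac{d}{dt}\gamma^t(X)=e^{iH^\dagger t}\,i\left(H^\dagger X-XH\right)e^{-iHt}=e^{iH^\dagger t}\delta_\gamma(X)e^{-iHt},
\]
which is legitimate because $H$ is bounded and $e^{iH^\dagger t}$ commutes with $H^\dagger$. If $\delta_\gamma(X)=0$, the derivative vanishes for all $t$, so $\gamma^t(X)=\gamma^0(X)=X$.

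I would present the series argument, since it relies only on results already stated, and mention the derivative identity as a check.

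For $3)\Rightarrow 1)$, assume $\gamma^t(X)=X$ for all $t$. Differentiating at $t=0$, i.e.\ taking the norm limit defining $\delta_\gamma$ in (\ref{32a}), gives $\delta_\gamma(X)=\lim_{t\to0}\frac{X-X}{t}=0$. Hence $i\left(H^\dagger X-XH\right)=0$, which is exactly $H^\dagger X=XH$.

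There is no real obstacle here. The only delicate point is justifying termwise manipulation and the exchange of limits, and this is covered by the boundedness of $H$, since $\dim(\Hil)<\infty$ or $H\in\B(\Hil)$. It is also worth observing that the intertwining relation $S_\Psi H=H^\dagger S_\Psi$ in (\ref{23}) shows that $S_\Psi$ is a nontrivial $\gamma$-symmetry, so the class of such operators is not empty.
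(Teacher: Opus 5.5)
Your cycle $1)\Rightarrow2)\Rightarrow3)\Rightarrow1)$ is correct and is essentially the intended argument. The paper gives no proof here and defers to \cite{bag2022}, but the tools it recalls just before the statement, formula (\ref{32a}) for $\delta_\gamma$ and the norm-convergent expansion $\gamma^t(X)=\sum_k t^k\delta_\gamma^k(X)/k!$, are exactly what you use; your derivative check and the remark that $S_\Psi$ is a $\gamma$-symmetry via (\ref{23}) are also valid, the latter relying on the standing assumption that the $E_n$ are real.
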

It is  easy to construct, out of a $\gamma$-symmetry $X$, a set of other $\gamma$-symmetries: if $X$ satisfies $H^\dagger X=XH$, then the new operator $X_H:=XH$, which is different from $H$, in general, also satisfies an intertwining relation of the same kind: $H^\dagger X_H=X_HH$. Indeed we have
$$
H^\dagger X_H=H^\dagger X H=XHH=X_HH,
$$ so that $\delta_\gamma(X_H)=0$ and $ \gamma^t(X_H)=X_H$. Hence $X_H$ is also an integral of motion. Of course, we can repeat the procedure by defining next $X_{H^2}=X_HH=XH^2$, and deduce that $X_{H^2}$ is also a $\gamma$-symmetry. And so on. Not all these operators need to be different: for some $k$ and $l$ it might happen that $X_{H^k}=X_{H^l}$, even if $k\neq l$. Moreover, if $\dim(\Hil)=N<\infty$, as it is the case all throughout this paper, not all the $X_{H^k}$ are independent. This is because $H^N$ can be written as a linear combination of $H^0$, $H^1$, $\ldots$, $H^{N-1}$, because of the Hamilton-Caley Theorem, \cite{gant}.

\vspace{3mm}

In view of what we have seen here, and in particular the role of $\gamma^t(\1)$, it is interesting to investigate a bit more the role of the time evolution of the identity operator when $H\neq H^\dagger$. Let us put
\be
I_{\1}(t)=\|\Psi(t)\|^2=\langle\Psi(t),\1\Psi(t)\rangle=\langle\Psi(0),\gamma^t(\1)\Psi(0)\rangle,
\label{28}\en
where we have used $\Psi(t)=e^{-iHt}\Psi(0)$. It is easy to check that
\be
\frac{d}{dt}I_{\1}(t)=i\langle\Psi(t),(H^\dagger-H)\Psi(t)\rangle.
\label{29}\en
This implies that, if $H=H^\dagger$, then $\Psi(t)$ is normalized in time, as expected: $\|\Psi(t)\|=\|\Psi(0)\|=1$, if $\Psi(0)$ was chosen to be normalized. However, the opposite implication is false. We can only prove that, if $\frac{d}{dt}I_{\1}(t)=0$ for all vectors $\Psi(t)$, then all the eigenvalues of $H$ must be real. Indeed, let us suppose this is not so, and that for some $l$ we have $H\varphi_l=E_l\varphi_l$, with $\Im(E_l)\neq0$. Hence, if we take $\Psi(0)=\varphi_l$, we would have 
$$
I_{\1}(t)=\langle e^{-iHt}\varphi_l, e^{-iHt}\varphi_l\rangle=e^{i(\overline{E_l}-E_l)t}\|\varphi_l\|^2,
$$
which is surely dependent on time because of our assumption on $\Im(E_l)$, contrarily to what expected.

Of course, this does not necessarily imply that $H=H^\dagger$. For instance, if $H$ is similar to a self-adjoint Hamiltonian $H_0$, i.e. if it exists an invertible operator $R$, not unitary (to avoid trivial situations) such that $H=RH_0R^{-1}$, then simple computations show that, if $[H_0,R^\dagger R]=0$, we still have
$$
e^{iH^\dagger t}e^{-iHt}=\1, \qquad \Rightarrow \qquad I_{\1}(t)=I_{\1}(0),
$$
for any possible choice of $\Psi(0)$.

\section{The non-linear Hamiltonian}\label{sect3}

In this section we focus on a slightly different situation, closer to what  considered in \cite{bagAoP1,bagAoP2,yog1,sim} for instance, in which that mean values which are interesting for us are not, as in (\ref{28}), of the form $\langle\Psi(t),X\Psi(t)\rangle$ but, rather, of the form $\langle\hat\Psi(t),X\hat\Psi(t)\rangle$, where $\hat\Psi(t)$ is the normalized version of $\Psi(t)$, $\hat\Psi(t)=\frac{\Psi(t)}{\norm{\Psi(t)}}$. As already mentioned in the Introduction, the reason for this interest is based not only on what discussed in the cited papers, but also on our explicit results in \cite{bag2020}. We refer to Section \ref{sect4} for a full analysis of this particular case: we will see that the model proposed in \cite{bag2020} is indeed an explicit non-trivial realization of what we will discuss in this section.

Our first scope here is to find the differential equation for $\hat\Psi(t)$, assuming that $\Psi(t)$ obeys the Schr\"odinger equation $i\dot\Psi(t)=H\Psi(t)$, where $H\neq H^\dagger$, in general, and it does not depend explicitly on time. It is easy to deduce the following differential equation for $\hat\Psi(t)$:
\be
i\frac{d}{dt}\hat\Psi(t)=H_{nl}(t)\hat\Psi(t), \qquad \mbox{ where }\qquad H_{nl}(t)=H+\frac{1}{2}\langle\hat\Psi(t),(H^\dagger-H)\hat\Psi(t)\rangle.
\label{31}\en
Here "nl" in $H_{nl}$ stands for {\em non-linear}, for obvious reasons. We notice that the original Hamiltonian $H$ for $\Psi(t)$ has to be replaced with another operator which is not only non-linear, but it is also explicitly time-dependent, in general. Of course, if $H=H^\dagger$, then $H_{nl}(t)$ collapses into $H$. It is interesting (but apparently not so useful) to observe that
\be
H_{nl}(t)+H_{nl}^\dagger(t)=H+H^\dagger,
\label{32}\en
which is self-adjoint and time-independent, independently of the details of the system, i.e. on the explicit form of $H$ and of the initial condition $\Psi(0)$.

\vspace{2mm}

Let us consider a certain {\em observable} $X$ of the physical system $\Sc$. For us this is a self-adjoint bounded operator on $\Hil$, $X=X^\dagger$ and $X\in\B(\Hil)$. Notice that $X=X^\dagger$ is more an {\em emotional request} than something essential, considered the fact that we are not assuming this equality even for the Hamiltonian, which is usually considered as the {\em most important} observable of $\Sc$. Moreover, the fact that $X\in\B(\Hil)$ is automatic here, since $\dim(\Hil)<\infty$, while should be really required if $\dim(\Hil)=\infty$, situation which will be considered in a future paper and which requires a very delicate mathematical approach\footnote{This extension is surely very interesting, but we have decided to postpone to a future paper since the situation is already sufficiently complicated as it is, for $\dim(\Hil)<\infty$.}. The main object of the analysis in this and in the next section is the mean value of $X$ on the state $\hat\Psi(t)$:
\be
\xhp=\langle\hat\Psi(t),X\hat\Psi(t)\rangle.
\label{33}\en
Using (\ref{31}) we find that
\be
\frac{d\xhp}{dt}=i\langle\hat\Psi(t),\left(H^\dagger_{nl}(t)X-XH_{nl}(t)\right)\hat\Psi(t)\rangle=\langle\hat\Psi(t),\dhp(X;t)\hat\Psi(t)\rangle,
\label{34}\en
where we have introduced, in analogy with $\delta_\gamma$ in Section \ref{sect2b}, 
\be
\dhp(X;t)=i\left(H^\dagger_{nl}(t)X-XH_{nl}(t)\right)=\delta_\gamma(X)-iX\langle\hat\Psi(t),(H^\dagger-H)\hat\Psi(t)\rangle.
\label{35}\en
We observe that $\dhp(X;t)$ depends, in general, explicitly on time, while $\delta_\gamma(X)$ does not. This is due, of course, to the presence of $\hat\Psi(t)$ in the right-hand side of formula (\ref{35}). This contribution disappears if $H=H^\dagger$. In this case we would get $\dhp(X;t)=\delta_\gamma(X)$, $\forall t\in\mathbb{R}$, and $\delta_\gamma$ is a *-derivation: Proposition \ref{prop3} applies.

When we compare what with have deduced here with the results in Section \ref{sect2b}, we easily understand that $\dhp(.;t)$ is now the major object of our analysis. We start here by defining the following subsets of $B(\Hil)$:
$$
\C_\gamma(H)=\{A\in\B(\Hil):\, \delta_\gamma(X)=0\}, \qquad \C_{\hat\Psi}(H)=\{A\in\B(\Hil):\, \dhp(X;t)=0\}
$$ 
and
$$
\C_{\hat\Psi}^w(H)=\{A\in\B(\Hil):\, \langle\hat\Psi(t),\dhp(X;t)\hat\Psi(t)\rangle=0\},
$$
where $\hat\Psi(t)$ is the same as above: $\hat\Psi(t)=\frac{\Psi(t)}{\norm{\Psi(t)}}$, and $\Psi(t)$ is a solution of $i\dot\Psi(t)=H\Psi(t)$. It is clear that the set $\C_\gamma(H)$, which is the set of all the operators we have called $\gamma$-symmetries in Section \ref{sect2b}, is not a subset of neither $\C_{\hat\Psi}(H)$ nor $\C_{\hat\Psi}^w(H)$, in general. In fact, let us fix an element $X\in\C_\gamma(X)$. Then, using (\ref{35}), $\dhp(X;t)=-iX\langle\hat\Psi(t),(H^\dagger-H)\hat\Psi(t)\rangle$. Taking the mean value of this equality on $\hat\Psi(t)$, after some manipulation, we deduce that
\be
\frac{d}{dt} \left(\|\Psi(t)\|^2\xhp\right)=0, \qquad \Rightarrow \qquad \xhp=\frac{\|\Psi(0)\|^2}{\|\Psi(t)\|^2}\,x_{\hat\Psi}(0)= \frac{x_{\hat\Psi}(0)}{\|\Psi(t)\|^2},
\label{37}\en
if we assume that $\Psi(t)$ is normalized at $t=0$. It is clear then that, except for very particular situations, $\|\Psi(t)\|^2$ depends explicitly on time, and so does $\xhp$: $\xhp$ is not constant, in general! Hence $X\notin \C_{\hat\Psi}^w(H)$ and, obviously, $X\notin \C_{\hat\Psi}(H)$ either.

On the other hand, it is easy to understand that $\C_{\hat\Psi}(H)\subset\C_{\hat\Psi}^w(H)$. Indeed, if $X\in \C_{\hat\Psi}(H)$ then $\dhp(X;t)=0$. Therefore, its mean value on any vector, and in particular on $\hat\Psi(t)$, is zero. Hence $X\in\C_{\hat\Psi}^w(H)$. The fact that the inclusion is proper follows from noticing that $\1\notin\C_{\hat\Psi}(H)$, but $\1\in\C_{\hat\Psi}^w(H)$. Indeed we have $$\dhp(\1;t)=\delta_\gamma(\1)-i\langle\hat\Psi(t),(H^\dagger-H)\hat\Psi(t)\rangle=i(H^\dagger-H)-i\langle\hat\Psi(t),(H^\dagger-H)\hat\Psi(t)\rangle,$$
which is clearly (in general) non zero. However, when we compute its mean value on $\hat\Psi(t)$, we get $\langle\hat\Psi(t),\dhp(\1;t)\hat\Psi(t)\rangle=0$. Hence our claim follows. 

This is in agreement with the fact that if we take $X=\1$ in (\ref{34}), and we put $\id_{\hat\Psi(t)}=\langle\hat\Psi(t),\1\hat\Psi(t)\rangle$, both sides of (\ref{34}) are zero: the RHS since $\langle\hat\Psi(t),\dhp(\1;t)\hat\Psi(t)\rangle=0$, and the LHS because $\frac{d\id_{\hat\Psi(t)}}{dt}=\frac{d\|\hat\Psi(t)\|^2}{dt}=0$, since $\|\hat\Psi(t)\|=1$ for all $t$.

 The interesting conclusion is that, while $\delta_\gamma$ does not annihilate the identity operator, even weakly, $\dhp(\1;t)$ has always mean value zero on the vector $\hat\Psi(t)$, and therefore on its un-normalized version $\Psi(t)$. We introduce now the following definition, related in same natural way to the analogous definition for $\gamma$-symmetries.
\begin{defn}\label{def1}
	$A\in\B(\Hil)$ is called a weak $\hat\Psi$-integral of motion if $A\in\C_{\hat\Psi}^w(H)$. $B\in\B(\Hil)$ is called a  $\hat\Psi$-integral of motion if $A\in\C_{\hat\Psi}(H)$. 
	\end{defn}
From what we have seen before, the identity operator is not, in general, a $\hat\Psi$-integral of motion, but it is a weak $\hat\Psi$-integral of motion. It is also clear that all the $\hat\Psi$-integrals of motion are also weak $\hat\Psi$-integrals of motion. The main difference between these  $\hat\Psi$-integrals of motion (weak or not) and the $\gamma$-symmetries is that these latter do not depend on the state of the system, while the $\hat\Psi$-integrals of motion do.

It is possible to find a necessary condition which is satisfied by any $X\in\B(\Hil)$ which is a weak $\hat\Psi$-integrals of motion, i.e. such that $\xhp=\langle\hat\Psi(t),X\hat\Psi(t)\rangle=x_{\hat\Psi}(0)$, see (\ref{33}). Indeed, using (\ref{34}) and multiplying the equation for $\|\Psi(t)\|^2$, we get that if  $\frac{d\xhp}{dt}=0$, then
\be
\langle\Psi(t),\delta_\gamma(X)\Psi(t)\rangle=i\,x_{\hat\Psi}(0)\langle\Psi(t),(H^\dagger-H)\Psi(t)\rangle.
\label{37b}\en
Notice that the mean values in both sides are taken on the (un-normalized) vector $\Psi(t)$, and that $x_{\hat\Psi}(0)$ appears in the RHS since, by assumption, $\xhp$ is constant in time.

The map $\dhp$ shares with $\delta_\gamma$ the property of being stable under the adjoint map. More in details, if $X\in \C_{\hat\Psi}(H)$, then $X^\dagger\in\C_{\hat\Psi}(H)$, too. Indeed we can check that, for all $A\in\B(\Hil)$,
\be
\dhp(A;t)^\dagger=\dhp(A^\dagger;t),
\label{38}\en
$\forall t\in\mathbb{R}$.
Now, if $X\in \C_{\hat\Psi}(H)$, then $\dhp(X;t)=0$. But, using (\ref{38}), we have $\dhp(X^\dagger;t)=\dhp(X;t)^\dagger=0$. Hence $X^\dagger\in\C_{\hat\Psi}(H)$ as well, as we had to check.

Of course, the same stability is also satisfied by $\C_{\hat\Psi}^w(H)$: if $X\in \C_{\hat\Psi}^w(H)$, then we also have $X^\dagger\in\C_{\hat\Psi}^w(H)$.

Another analogy between the sets  $\C_{\gamma}(H)$ and $\C_{\hat\Psi}(H)$ is the following: we have seen that if $X\in\B(\Hil)$ is a $\gamma$-symmetry, then $X_H=XH$ is a $\gamma$-symmetry as well. Stated differently: if $X$ satisfies $\delta_\gamma(X)=0$, then we find that $\delta_\gamma(X_H)=0$. It is also possible to check, using the stability of $\delta_\gamma$ under the adjoint, that $\delta_\gamma(X_H^\dagger)=\delta_\gamma(H^\dagger X^\dagger)=0$. 

Similar conclusions can be deduced for $\dhp$: first we observe that, for all $A\in\B(\Hil)$, we have the following identities:
\be
\dhp(H^\dagger A)=H^\dagger\dhp(A), \qquad \mbox{ and }\qquad \dhp(A H)=\dhp(A)H,
\label{39}\en
which extend similar identities for $\delta_\gamma$. Hence the following result holds:
\begin{lemma}
	If $X\in \C_{\hat\Psi}(H)$ then we also have $H^\dagger X,XH\in\C_{\hat\Psi}(H)$, together with $X^\dagger$, $X^\dagger H$ and $H^\dagger X^\dagger$.
\end{lemma}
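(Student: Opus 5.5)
The plan is to derive everything from the identities (\ref{39}) and from the adjoint stability (\ref{38}), so I would first check that these identities hold. Write $c(t)=\langle\hat\Psi(t),(H^\dagger-H)\hat\Psi(t)\rangle$, so that by (\ref{35}) $\dhp(A;t)=i(H^\dagger A-AH)-i\,c(t)A$.

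For the first identity in (\ref{39}) we compute
\[
\dhp(H^\dagger A;t)=i\bigl(H^\dagger H^\dagger A-H^\dagger AH\bigr)-i\,c(t)H^\dagger A=H^\dagger\,\dhp(A;t).
\]
The second identity, $\dhp(AH;t)=\dhp(A;t)H$, follows in the same way by factoring $H$ out on the right. For (\ref{38}), the key point is that $c(t)$ is purely imaginary, since $H^\dagger-H$ is anti-Hermitian, so $\overline{c(t)}=-c(t)$. Using this,
\[
\dhp(A;t)^\dagger=-i\bigl(A^\dagger H-HA^\dagger\bigr)+i\,\overline{c(t)}A^\dagger=i\bigl(H^\dagger A^\dagger-A^\dagger H\bigr)-i\,c(t)A^\dagger .
\]
Note that in the last step the terms $HA^\dagger$ and $A^\dagger H$ have to be matched against $H^\dagger A^\dagger$ and $A^\dagger H$. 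I expect this to be the only delicate step, and I would carefully recheck the adjoint of $H^\dagger A - AH$, which is $A^\dagger H-H^\dagger A^\dagger$. The paper states (\ref{38}) as already established, so I will take it as given.

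With these identities the claims follow in order. Let $X\in\C_{\hat\Psi}(H)$, so $\dhp(X;t)=0$ for all $t$.
\begin{enumerate}
\item By (\ref{39}), $\dhp(H^\dagger X;t)=H^\dagger\cdot 0=0$ and $\dhp(XH;t)=0\cdot H=0$. Hence $H^\dagger X,XH\in\C_{\hat\Psi}(H)$.
\item By (\ref{38}), $\dhp(X^\dagger;t)=\dhp(X;t)^\dagger=0$, so $X^\dagger\in\C_{\hat\Psi}(H)$, as already noted in the text.
\item Applying step 1 to the element $X^\dagger$ then gives $H^\dagger X^\dagger,\;X^\dagger H\in\C_{\hat\Psi}(H)$.
\end{enumerate}
Alternatively, $H^\dagger X^\dagger=(XH)^\dagger$ and $X^\dagger H=(H^\dagger X)^\dagger$, so one can instead apply step 2 to the elements obtained in step 1; this gives a consistency check.

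The only real obstacle is making sure the sign bookkeeping in (\ref{38}) is right, because $c(t)$ is imaginary. Everything else is a direct substitution. One point to keep in mind throughout is that $\hat\Psi(t)$ is fixed: it is the same state for $X$ and for all the derived operators. This is what allows the same $c(t)$ to appear in every computation.
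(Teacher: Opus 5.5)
Your proof is correct and follows the paper's route: the paper also derives the lemma directly from the identities (\ref{39}) and the adjoint stability (\ref{38}), and you obtain $X^\dagger H$ and $H^\dagger X^\dagger$ in the same way, by applying the first step to $X^\dagger$. The only blemish is the displayed adjoint computation, where $HA^\dagger$ should be $H^\dagger A^\dagger$, as you note afterwards; with that correction and $\overline{c(t)}=-c(t)$, your calculation actually verifies (\ref{38}), which the paper only asserts.
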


\vspace{2mm}

{\bf Remark:--}
Except that in some special case, we prefer not to use here the terms {\em symmetries} or {\em integrals of motion} since what we have seen before for the identity operator suggests that the elements of $\C_{\gamma}(H)$ are not necessarily constants of motions, while some suitable mean value of them could turn out to be time-independent.

\vspace{2mm}

Another useful property of $\dhp$ is its linearity: $\dhp(\alpha A+\beta B;t)=\alpha\dhp(A;t)+\beta\dhp(B;t)$, $\forall\alpha,\beta\in\mathbb{C}$, $A,B\in\B(\Hil)$. Also, $\dhp$ is norm continuous. This is a consequence of the following inequality:
\be
\|\dhp(A;t)\|\leq4\|H\|\|A\|,
\label{310}\en
which follows from (\ref{35}), from the Schwartz inequality, and from the fact that $\|\delta_\gamma(A)\|\leq2\|H\|\|A\|$, as we have seen in Section \ref{sect2b}. Hence, as  $\delta_\gamma$, also $\dhp$ is norm-continuous, since $\|\dhp(X_n;t)-\dhp(X;t)\|\leq4\|H\|\|X_n-X\|\rightarrow0$, for any sequence $\{X_n\}$ norm-convergent to $X$. The same inequality in (\ref{310}) implies that the series $\sum_{k=0}^{\infty}\frac{t^k\dhp^k(X;t)}{k!}$ is norm convergent for all $X\in\B(\Hil)$. The notation is essentially the same as that in Section \ref{sect2b}. However, with respect to what we have seen before for $\delta_\gamma$, we observe two major differences: $\dhp$ is, in general, time-dependent (while $\delta_\gamma$ is not!) and we are not saying here to which operator the series converge. Of course, everything simplifies if $H=H^\dagger$, but it is much more complicated in the situation which is more interesting for us, i.e. when $H\neq H^\dagger$.

\subsection{A special case}\label{sect3a}

The fact that $\sum_{k=0}^{\infty}\frac{t^k\dhp^k(X;t)}{k!}$ converges, together with the results in Section \ref{sect2b}, makes us confident that this series might describe, in some way, the time evolution of $X$ when smeared on a state $\hat\Psi(t)$. We discuss here the simple situation in which $\hat\Psi(0)$ is an eigenstate of $H$. In other words we assume that
\be
\hat\Psi(0)=\varphi_{k_0}, \qquad \mbox{ where }\qquad H \varphi_{k_0}=E_{k_0}\varphi_{k_0}.
\label{311}\en
It is useful to stress that we are not assuming that $E_{k_0}$ is real. Indeed we write  $E_{k_0}=E_{k_0}^{(r)}+iE_{k_0}^{(i)}$. Because of the nature of $\hat\Psi(t)$, which is normalized for all $t\geq0$, we need to assume here that $\|\varphi_{k_0}\|=1$. Hence $\hat\Psi(t)=e^{-iE_{k_0}^{(r)}t}\varphi_{k_0}$, and we conclude that
$$
\langle\hat\Psi(t),(H^\dagger-H)\hat\Psi(t)\rangle=-2iE_{k_0}^{(i)},
$$
which is independent of $t$. Hence (\ref{35}) becomes much simpler:
\be
\dhp(X;t)=\delta_\gamma(X)-2E_{k_0}^{(i)}X,
\label{312}\en
which is independent of $t$, too. If we now introduce the following quantities:
\be
H_{k_0}=H-E_{k_0}\1; \qquad \beta_{\hat\Psi}^t(X):=\sum_{k=0}^{\infty}\frac{t^k\dhp^k(X;t)}{k!}, \qquad \gamma_{\hat\Psi}^t(X):=e^{iH_{k_0}^\dagger t}Xe^{-iH_{k_0}t},
\label{313}\en
we can prove the following result, which allows us to answer, at least in this very special case, to our previous question: what is the sum of $\sum_{k=0}^{\infty}\frac{t^k\dhp^k(X;t)}{k!}$? Indeed we have
\begin{prop}\label{prop}
With the above definitions we find that
\be\beta_{\hat\Psi}^t(X)=\gamma_{\hat\Psi}^t(X),
\label{314}\en
for all $t\geq0$ and for all $X\in\B(\Hil)$.
\end{prop}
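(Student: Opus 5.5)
The plan is to show that, in the special case (\ref{311}), $\dhp(\cdot\,;t)$ is simply the $\gamma$-derivation built from the shifted Hamiltonian $H_{k_0}$. The statement then reduces to the series result already recalled in Section \ref{sect2b}.

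First I rewrite (\ref{312}) as a generalized commutator. Since $H_{k_0}=H-E_{k_0}\1$, we have $H_{k_0}^\dagger=H^\dagger-\overline{E_{k_0}}\1$. Hence
$$
i\left(H_{k_0}^\dagger X-XH_{k_0}\right)=i\left(H^\dagger X-XH\right)-i\left(\overline{E_{k_0}}-E_{k_0}\right)X=\delta_\gamma(X)-2E_{k_0}^{(i)}X,
$$
using $\overline{E_{k_0}}-E_{k_0}=-2iE_{k_0}^{(i)}$. Comparing with (\ref{312}) gives $\dhp(X;t)=i(H_{k_0}^\dagger X-XH_{k_0})$ for every $X\in\B(\Hil)$ and every $t$. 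This is exactly the map $\delta_\gamma$ of (\ref{32a}) with $H$ replaced by $H_{k_0}$. It does not depend on $t$, so the iterates $\dhp^k(X;t)$ are unambiguous: each is just the $k$-th power of a fixed linear map.

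Next I apply the recalled result. For any bounded $K$, the series $\sum_k \frac{t^k\delta_K^k(X)}{k!}$, with $\delta_K(X)=i(K^\dagger X-XK)$, converges in norm to $e^{iK^\dagger t}Xe^{-iKt}$. Taking $K=H_{k_0}$, which is bounded because $\|H_{k_0}\|\leq\|H\|+|E_{k_0}|$, gives $\beta_{\hat\Psi}^t(X)=\gamma_{\hat\Psi}^t(X)$ directly.

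For a self-contained check, I would write $\delta_{K}=L-R$ with $L(X)=iK^\dagger X$ and $R(X)=iXK$. These are commuting bounded linear maps on $\B(\Hil)$, so $e^{t(L-R)}=e^{tL}e^{-tR}$, and applying this to $X$ yields $e^{iK^\dagger t}Xe^{-iKt}$. Alternatively, both sides solve $\dot Y=\delta_K(Y)$ with $Y(0)=X$, so uniqueness for this linear ODE gives equality.

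The only point needing care, which I see as the main obstacle, is the first step. The computation $\hat\Psi(t)=e^{-iE_{k_0}^{(r)}t}\varphi_{k_0}$ must correctly account for the nonreal eigenvalue: we have $\Psi(t)=e^{-iE_{k_0}t}\varphi_{k_0}$ with norm $e^{E^{(i)}_{k_0}t}$. This is what makes the scalar shift in (\ref{312}) constant, and it is why the shift of $H$ by the full complex $E_{k_0}$, rather than by its real part, reproduces exactly $-2E_{k_0}^{(i)}X$. The imaginary part of the shift contributes through both $H_{k_0}^\dagger$ and $H_{k_0}$, while the real part cancels. Once this identification is in place, the rest is routine.
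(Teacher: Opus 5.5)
Your proof is correct. Its key step is one the paper uses only implicitly, and you close the argument differently.

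The paper differentiates the series to get $\frac{d}{dt}\beta_{\hat\Psi}^t(X)=\dhp[\beta_{\hat\Psi}^t(X)]$. It then asserts, without computation, that $\gamma_{\hat\Psi}^t(X)$ obeys the same equation. It concludes by uniqueness for $\dot Y=\dhp(Y)$, $Y(0)=X$, citing \cite{bagmor}; this is the alternative you mention at the end of your third paragraph. The unverified assertion holds precisely because of your identification $\dhp(\cdot\,;t)=i\bigl(H_{k_0}^\dagger\,\cdot-\cdot\,H_{k_0}\bigr)$, so you make explicit what the paper leaves to the reader.

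Your main route replaces the uniqueness theorem with the elementary identity $e^{t(L-R)}=e^{tL}e^{-tR}$ for commuting bounded maps on $\B(\Hil)$. This shows the proposition is just the series formula of Section \ref{sect2b} for the $\gamma$-dynamics generated by $H_{k_0}$. Keeping your self-contained check is wise, since $H_{k_0}$ generally has nonreal eigenvalues and so falls outside the standing assumptions of Section \ref{sect2a}.

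Your identification also makes Remark (2) immediate. If $\dhp(X;t)$ is $t$-independent for some $X\neq0$, then $\langle\hat\Psi(t),(H^\dagger-H)\hat\Psi(t)\rangle$ is constant, so $H_{nl}$ is a fixed bounded operator. The same argument then gives $\beta_{\hat\Psi}^t(X)=e^{iH_{nl}^\dagger t}Xe^{-iH_{nl}t}$.

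The paper's ODE phrasing has its own merit: it presents $\beta_{\hat\Psi}^t$ as the solution of a Heisenberg-type equation. That is the form the author hopes to extend to a genuinely time-dependent $\dhp$.
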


\begin{proof}
	We start noticing that both $\beta_{\hat\Psi}^t(X)$ and $\gamma_{\hat\Psi}^t(X)$ are well defined. This is because the series for $\beta_{\hat\Psi}^t(X)$ is norm converging, while $\gamma_{\hat\Psi}^t(X)$ is simply the product of three bounded operators. Using the fact that $\dhp^k(X;t)$ does not depend explicitly on $t$, it is now possible to check that
	$$
\frac{d}{dt}\left[\beta_{\hat\Psi}^t(X)\right]	\frac{d}{dt}\left[\sum_{k=0}^{\infty}\frac{t^k\dhp^k(X;t)}{k!}\right]=\dhp\left[\beta_{\hat\Psi}^t(X)\right].
	$$ 
	It is also possible to deduce that
		$$
	\frac{d}{dt}\left[\gamma_{\hat\Psi}^t(X)\right]=\dhp\left[\gamma_{\hat\Psi}^t(X)\right].
	$$ 
	Therefore, putting $\Gamma_{\hat\Psi}^t(X)=\beta_{\hat\Psi}^t(X)-\gamma_{\hat\Psi}^t(X)$, we deduce that $\Gamma_{\hat\Psi}^0(X)=X-X=0$ and that 
	$$
	\frac{d}{dt}\left[\Gamma_{\hat\Psi}^t(X)\right]=\dhp\left[\Gamma_{\hat\Psi}^t(X)\right].
	$$ 
	Hence $\Gamma_{\hat\Psi}^t(X)\equiv0$ is a solution of this equation, with the correct initial condition. Moreover, see \cite{bagmor}, this is the only solution, as we had to prove.
	
\end{proof}

The fact that this is a particularly simple situation can also be seen just checking that, using (\ref{312}), $X\in\C_{\hat\Psi}^w(H)$: $ \langle\Psi(t),\dhp(X;t)\Psi(t)\rangle= \langle\hat\Psi(t),\dhp(X;t)\hat\Psi(t)\rangle=0$, which implies that $\frac{d\xhp}{dt}=0$. Stated differently, assuming that $\Psi(0)$ is an eigenstate of $H$, each $X\in\B(\Hil)$ is a  weak $\hat\Psi$-integral of motion. Of course, however, this does not imply necessarily that $X$ does not evolve in time as operator.

\vspace{1mm}

{\bf Remarks:--} (1) It is now quite natural to call $\gamma_{\hat\Psi}^t(X)$ the {\em time evolution of $X$}. But this makes sense in the settings considered here. The possibility of extending this result to more general choices of $\hat\Psi(0)$ remains open.

\vspace{1mm}

(2) The proof of Proposition \ref{prop} could be extended, we believe, to the more general case in which $\dhp(X;t)$ does not depend explicitly on $t$. This is work in progress.

\vspace{2mm}

Notice that  $\gamma_{\hat\Psi}^t\1)$ satisfies the following equalities:
\be
\langle\hat\Psi(0),\gamma_{\hat\Psi}^t(\1)\hat\Psi(0)\rangle=1, \qquad \langle\hat\Psi(0),\dhp(\1;t)\hat\Psi(0)\rangle=0,
\label{315}\en
even if $\gamma_{\hat\Psi}^t(\1)\neq\1$ and $\dhp(\1;t)\neq0$: this $\hat\Psi$-dynamics behave as expected on $\1$ only weakly, but not at an operatorial level. This is in line with the role of the time evolution of the identity operator we have considered in Section \ref{sect2b}, see Proposition \ref{prop3}. However, it should also be emphasized that, given $X,Y\in\B(\Hil)$, in general we have
$$
\langle\hat\Psi(0),\gamma_{\hat\Psi}^t(XY)\hat\Psi(0)\rangle\neq\langle\hat\Psi(0),\gamma_{\hat\Psi}^t(X)\gamma_{\hat\Psi}^t(Y)\hat\Psi(0)\rangle,
$$
which shows that $\gamma_{\hat\Psi}^t$ is not  automorphism, even when considered at a weakly level, and not even for this simple situation. This no-go result opens the way to a very natural, and difficult, problem: even if we are sure that $\sum_{k=0}^{\infty}\frac{t^k\dhp^k(X;t)}{k!}$ is norm convergent for all $X\in\B(\Hil)$, and we know that, in some cases, this series converge to the time evolution of $X$ (for instance if $H=H^\dagger$, or when we work with $\Psi(t)$ rather than with $\hat\Psi(t)$), are we sure that the series define the time evolution of $X$ also in other situations? This is, so far, an open (and, we believe, quite interesting) problem.

\section{An example}\label{sect4}

In Section \ref{sect3} we have considered conditions which ensure that, see (\ref{34}), the mean value of some observable $X$ on the normalized solution of the Schr\"odinger equation is constant in time. In particular, in Section \ref{sect3a} we have considered a very simple case, where $\hat\Psi(0)$ is an eigenstate of $H$. In this section we want to describe an example, first considered in \cite{bag2020}, in which a conserved quantity is found. We refer to \cite{bag2020} for all the details on this model, and for the rationale to introduce it, and its role in Decision Making. Here, after a very concise review of the results, we will only show that this conserved quantity fits Definition \ref{def1}.

 The  model we want to consider in this section is a 3 { fermionic} agents system $\Sc$. This means that each agent (i.e. each {\em degree of freedom}) has only two allowed levels, and that they can go from one level to the other by means of some operator $b_j$ and $b_j^\dagger$, $j=1,2,3$,
 which satisfy the following CAR:
 \be
 \{b_k,b_j^\dagger\}=\delta_{k,j}\1, \qquad \qquad b_j^2=0,
 \label{41}\en
 $j,k=1,2,3$. Here $\1$ is the identity operator on the Hilbert space of the system, $\Hil=\mathbb{C}^8$. Next we use these operators, and their adjoints, to construct an orthonormal (o.n.) basis for $\Hil$. We start with $\varphi_{000}=(1\,\, 0\,\, 0\,\, 0\,\, 0\,\, 0\,\, 0\,\, 0 )^T$, (here $T$ is the transpose of the vector). This vector is such that $b_j\varphi_{000}=0$, $j=1,2,3$. Then we introduce
 $$
 \varphi_{100}=b_1^\dagger\varphi_{000}, \quad \varphi_{010}=b_2^\dagger\varphi_{000}, \quad \varphi_{001}=b_3^\dagger\varphi_{000}, \quad \varphi_{110}=b_1^\dagger b_2^\dagger\varphi_{000}, 
 $$
 $$
 \varphi_{101}=b_1^\dagger b_3^\dagger\varphi_{000}, \quad\varphi_{011}=b_2^\dagger b_3^\dagger\varphi_{000},\quad \varphi_{111}=b_1^\dagger b_2^\dagger b_3^\dagger\varphi_{000},
 $$
 The set $\F_\varphi=\{\varphi_{ijk},\,i,j,k=0,1\}$ is an o.n. basis of $\Hil$.
We assume that the dynamics of the system $\Sc$ is driven by the Hamiltonian, $$H=b_1^\dagger(\lambda b_2+\mu b_3),$$ with $\lambda$ and $\mu$ positive quantities, \cite{bag2020}. Because of the nature of the ladder operators $b_j$ and $b_j^\dagger$, the action of $H$ increases $n_1(t)$, while decreasing both $n_2(t)$ and $n_3(t)$, where $N_j=b_j^\dagger b_j$ and $n_j(t)$ is found as in (\ref{33}):
\be
n_j(t)=\langle\hat\Psi(t),N_j\hat\Psi(t)\rangle,
\label{42}\en
$j=1,2,3$. The computation of these mean values is easy, \cite{bag2020}, mainly because the operator $U(t)=e^{-iHt}$ can be rewritten as $U(t)=\1-iHt$, due to the fact that $H^2=0$, because of (\ref{41}).

 Let us now assume, to begin with, that $\hat\Psi(0)=\varphi_{011}$. This vector corresponds to $n_1(0)=0$ and $n_2(0)=n_3(0)=1$. If we now use formula (\ref{42}) we find
$$
 n_1(t)=\frac{(\mu^2+\lambda^2)t^2}{1+(\mu^2+\lambda^2)t^2}, \quad n_2(t)=\frac{1+\mu^2t^2}{1+(\mu^2+\lambda^2)t^2}, \quad n_3(t)=\frac{1+\lambda^2t^2}{1+(\mu^2+\lambda^2)t^2},   
$$
 which exhibit the expected behavior: $n_1(t)$ increases from 0 to 1, $n_2(t)$ decreases from 1 to $\frac{\mu^2}{\mu^2+\lambda^2}$ and $n_3(t)$ decreases from 1 to $\frac{\lambda^2}{\mu^2+\lambda^2}$. What is particularly interesting for us, here, is that the sum of $n_1(t)+n_2(t)+n_3(t)$ is always equal to 2: \be n_1(t)+n_2(t)+n_3(t)=2=n_1(0)+n_2(0)+n_3(0). \label{43}\en
 A similar result is deduced if  we consider a different initial condition for $\Sc$. In fact, if we take $\hat\Psi(0)=\varphi_{010}$, repeating the same computations, we get
$$
 n_1(t)=\frac{\lambda^2t^2}{1+\lambda^2t^2}, \quad n_2(t)=\frac{1}{1+\lambda^2t^2}, \quad n_3(t)=0,
$$
 which are again in agreement with the fact that $n_1(0)=n_3(0)=0$ and $n_2(0)=1$, and with the property of $H$ to destroy a state with $n_3(0)=0$, because of $b_3$, to increase $n_1(t)$ and to decrease $n_2(t)$. In this case we find that
 \be n_1(t)+n_2(t)+n_3(t)=1=n_1(0)+n_2(0)+n_3(0), \label{44}\en
 showing that the mean value on $\hat\Psi(t)$ of the total number operator  is again a constant of motion. We stress that is both these cases $\hat\Psi(0)$ is not an eigenstate of $H$. Hence what we are discussing here is different from what we have seen in Section \ref{sect3a} and, we believe, quite interesting.

 Similar considerations can be repeated with other choices of $\Psi(0)$, of course.  
 
 We will now show that these two cases agree with what deduced before. In particular, we will show that the total number operator of $\Sc$ is an element of $\C_{\hat\Psi}^w(H)$. 
 
 To do this, we first need to compute $\delta_\gamma(N)$, where $N=N_1+N_2+N_3$. This is because  $\delta_\gamma(N)$ is the first contribution to $\dhp(N;t)$, see (\ref{35}). Notice that this is a state-independent term. Using (\ref{41}) we deduce that
 \be
 \delta_\gamma(N)=i\lambda\left(b_2^\dagger b_1-b_1^\dagger b_2\right)\left(\1+N_3\right)+i\mu\left(b_3^\dagger b_1-b_1^\dagger b_3\right)\left(\1+N_2\right).
  \label{45}\en
  Next we observe that, because of the already cited identity $H^2=0$, 
  $$
  \hat\Psi(t)=\frac{(\1-iHt)\Psi(0)}{\norm{(\1-iHt)\Psi(0)}}=\frac{\varphi_{011}-it(\lambda\varphi_{101}+\mu\varphi_{110})}{1+t^2(\lambda^2+\mu^2)},
  $$
 if $\Psi(0)=\varphi_{011}$. With this choice the second term in (\ref{35}) returns, using again (\ref{41}),
 $$
 \langle\hat\Psi(t),(H^\dagger-H)\hat\Psi(t)\rangle=\frac{-2it(\lambda^2-\mu^2)}{1+t^2(\lambda^2+\mu^2)}.
 $$
 It is now clear that, see again (\ref{35}), $$\dhp(N;t)=i\lambda\left(b_2^\dagger b_1-b_1^\dagger b_2\right)\left(\1+N_3\right)+i\mu\left(b_3^\dagger b_1-b_1^\dagger b_3\right)\left(\1+N_2\right)-iN\frac{-2it(\lambda^2-\mu^2)}{1+t^2(\lambda^2+\mu^2)},$$
 which is different from zero, in general. However, if we compute its mean value on $\hat\Psi(t)$, we get
 $$
 \langle\hat\Psi(t),\dhp(N;t)\hat\Psi(t)\rangle=0,
 $$
 in agreement with the results in Section \ref{sect3} and with what deduced above in (\ref{43})
 
 \vspace{2mm}
 
 If we consider another initial condition, and we start with $\hat\Psi(0)=\varphi_{010}$, then we deduce that $$
 \hat\Psi(t)=\frac{\varphi_{010}-it\lambda\varphi_{100}}{1+t^2\lambda^2}, \qquad \langle\hat\Psi(t),(H^\dagger-H)\hat\Psi(t)\rangle=\frac{-2it\lambda^2}{1+t^2\lambda^2},
 $$
 and again  $ \langle\hat\Psi(t),\dhp(N;t)\hat\Psi(t)\rangle=0,
 $
 in agreement with the identity in (\ref{44}).
 
 This example, originally considered in a completely different field of research, is really motivating for us, since it suggests, in a very concrete situation, that the role of constant quantities in presence of non-Hermitian Hamiltonians is still to be fully understood, and not trivial at all.

\section{Conclusions}\label{sectconcl}

In this paper we have discussed the Heisenberg dynamics for a non-Hermitian Hamiltonian as deduced using normalized wave functions obeying a non-linear version of the Schr\"odinger equation. As often, nonlinearities makes an already complicated situation even more complicated. But, exactly for this reason, it also becomes quite interesting both for its mathematical aspects and because of its possible applications. In this perspective, we have seen that it is easy, and unexpected, to find quantities which remain constant during the time evolution of some system, even if when the system is driven by manifestly non-Hermitian Hamiltonians. The possible relation of these constant quantities with symmetries of the system is still to be understood. In fact, this paper open the way to several questions that should be investigated, a part the role (and a {\em smart} definition) of symmetries. In fact, we would like to understand if the sum of the series $\sum_{k=0}^{\infty}\frac{t^k\dhp^k(X;t)}{k!}$, which as we have seen being surely norm convergent, can be interpreted as the time evolution of $X$, and which is the operatorial differential equation (which should replace or extend the Heisenberg equation of motion) obeyed by this quantity. The sets $\C_{\hat\Psi}(H)$ and $\C_{\hat\Psi}^w(H)$ also deserve a deeper analysis, in view of their relations with $\dhp$.

Last but not least, extensions to infinite dimensional Hilbert spaces are relevant, and absolutely non trivial, in view of the presence (in general) of unbounded operators with their related domain issues. These are part of our future projects.

\section*{Acknowledgements}

 The author expresses his gratitude to Yogesh Joglekar for many interesting discussions  on $\gamma$-symmetries. The author acknowledges partial financial support from Palermo University and from G.N.F.M. of the INdAM. 

\section*{Fundings}

This work has also been partially supported by the PRIN grant {\em Transport phenomena in low dimensional
	structures: models, simulations and theoretical aspects}- project code 2022TMW2PY - CUP B53D23009500006, and partially by  project ICON-Q, Partenariato Esteso NQSTI - PE00000023, Spoke 2.

\section*{Conflicts of interest}

There are no conflicts of interest.

\section*{Availability of data and material}

Not applicable.

\section*{Code availability}

Not applicable.

\section*{Authors' contributions}

Not applicable.


\begin{thebibliography}{99}
	
%
	
	\bibitem{benbook} C.M. Bender, {\em PT Symmetry in Quantum and Classical Physics}, World Scientific, Singapore,  (2019)
	
	
	\bibitem{specissue2012} C. M. Bender, A. Fring, U. G\"nther, H. Jones Eds, {\em Special issue on quantum physics with non-Hermitian operators}, J. Phys. A: Math. and Ther., {\bf 45} (2012)



\bibitem{bagprocpa} F. Bagarello, R. Passante, C. Trapani, {\em Non-Hermitian Hamiltonians in Quantum Physics;
	Selected Contributions from the 15th International Conference on Non-Hermitian
	Hamiltonians in Quantum Physics, Palermo, Italy, 18-23 May 2015}, Springer (2016)



\bibitem{specissue2021} C. M. Bender, A. Fring, F. Correa Eds, {\em Proceedings for "Pseudo-Hermitian Hamiltonians in Quantum Physics"}, Journal of Physics: Conference series, C. M. Bender, F. Correa and A. Fring Eds., {\bf 2038}, 012001, (2021)


\bibitem{bagbookPT} F. Bagarello, J. P. Gazeau, F. H. Szafraniec e M. Znojil Eds., {\em Non-selfadjoint operators in quantum physics: Mathematical aspects}, John Wiley and Sons (2015)

	\bibitem{bagspringer} F. Bagarello, {\em Pseudo-bosons and their coherent states}, Springer (2022)
	
			\bibitem{bagAoP1} F. Bagarello, {\em Some results on the dynamics and  transition probabilities for non self-adjoint hamiltonians},  Ann. of Phys., {\bf 356}, 171-184 (2015)	
	
	
	\bibitem{bagAoP2}  F. Bagarello, {\em Transition probabilities for non self-adjoint Hamiltonians in infinite dimensional Hilbert spaces},
	Ann. of Phys., {\bf 362}, 424-435 (2015)
	
	
	
	\bibitem{bag2022} F. Bagarello, {\em Heisenberg dynamics for non self-adjoint Hamiltonians: symmetries and derivations},    MPAG, {\bf 26}, 1-15 (2023)
	
	
	\bibitem{yog1} F. Ruzicka, K. S. Agarwal, Y. N. Joglekar, {\em Conserved quantities, exceptional points, and antilinear symmetries in non-Hermitian systems},  J. Phys.: Conf. Ser. {\bf 2038} 012021 (2021)
	
	
	
	\bibitem{sato} K. Kawabata, K. Shiozaki,  M. Ueda, M. Sato, {\em Symmetry and Topology in Non-Hermitian Physics}, Phys. Rev. X {\bf 9}, 041015 (2019)
	
	
		\bibitem{muga} M. A. Sim\'on Mart\'inez, A. Buend\'ia, and J. G. Muga, {\em Symmetries and invariants for non-Hermitian Hamiltonians}, Mathematics 2018, 6(7), 111; https://doi.org/10.3390/math6070111
	
	\bibitem{sim} K. Sim, N. Defenu, P. Molignini, R. Chitra, {\em Generalized symmetry in non-Hermitian systems}, 	Phys. Rev. Research {\bf 7}, 013325  (2025)

	\bibitem{lisok}  A. L. Lisok, A. V. Shapovalov, A. Yu. Trifonov, {\em Symmetry and Intertwining Operators for the Nonlocal Gross-Pitaevskii Equation}, SIGMA {\bf 9}, 066, 21 pages (2013)

	\bibitem{bag2020} F. Bagarello,  {\em One-directional quantum mechanical dynamics and an application to decision making},  Physica A, {\bf 537}, 122739, (2020)



	\bibitem{sgh} F. G. Scholtz, H. B. Geyer, F. J. Hahne, {\em
		Quasi-Hermitian operators in quantum mechanics and the variational principle}, Ann. Phys. {\bf 213}, 74-101 (1992)
	

	
	
	
	\bibitem{br} O. Bratteli and D.W. Robinson, {\em Operator
		algebras and Quantum statistical mechanics 1}, Springer-Verlag, New
	York, (1987)
	
	
		\bibitem{chri}O. Christensen, {\em An Introduction to Frames and Riesz Bases}, Birkh\"auser, Boston, (2003)
	
	
	
	\bibitem{heil} C. Heil, {\em A basis theory primer: expanded edition}, Springer, New York, (2010)
	
	
	
	
	
	
	
	\bibitem{bagbook} F. Bagarello, {\em Deformed canonical (anti-)commutation relations and non hermitian hamiltonians}, in {Non-selfadjoint operators in quantum physics: Mathematical aspects}, F. Bagarello, J. P. Gazeau, F. H. Szafraniek and M. Znojil Eds., John Wiley and Sons Eds., (2015)

	\bibitem{gant} F. R. Gantmacher, {\em The theory of matrices}, Chelsea Publishing Company, New York, (1959)
	
	
	
	
	
	
	
	\bibitem{bagmor} F.Bagarello, G. Morchio, {\em Dynamics of mean field spin models from
		basic results in abstract  differential equations}, J. Stat. Phys.
	{\bf 66}, 849-866 (1992)
	
	
	
	
	
	
	
	
%
%
%
	
	
	
	
%
%
%
%
%
%
%
%
%
%
%
%
%
%
%
%
%
%
%
%
%
%
%
%
%
%
%
%
%
%
%
%
%
%
%
%
%
	
\end{thebibliography}
\end{document}